\renewcommand{\phi}{\varphi}
\title{The expressiveness of MTL with counting}
\author{Paul Hunter}
\date{May 2012}
\newtheorem{lemma}{Lemma}
\newtheorem{theorem}{Theorem}
\begin{document}
\maketitle
\begin{abstract}
It is well known that MTL with integer endpoints is unable to express all of monadic first-order logic of order and metric (FO$(<,+1)$). Indeed, MTL is unable to express the counting modalities $C_n$ that assert a properties holds $n$ times in the next time interval. We show that MTL with the counting modalities, MTL+C, is expressively complete for FO$(<,+1)$. This result strongly supports the assertion of Hirshfeld and Rabinovich that Q2MLO is the most expressive decidable fragments of FO$(<,+1)$.
\end{abstract}

\section*{Preliminaries}
\subsection*{MTL+C}
We are interested in MTL (with past operators) plus
\begin{itemize}
\item Counting modalities $C_n$, $\overline{C_n}$ for $n \in \mathbb{N}$, and
\item Punctuality modalities $\Diamond_{=1}$, $\overline{\Diamond_{=1}}$.
\end{itemize}
Intuitively $C_n(\varphi)$ holds if $\varphi$ holds in at least $n$ distinct times in the next (strict) unit time interval, and $\Diamond_{=1}\varphi$ holds if $\varphi$ holds in exactly one time unit from now.  $\overline{C_n}$ and $\overline{\Diamond_{=1}}$ are the temporal duals ($n$ times in the previous unit time interval and exactly one time unit in the past respectively).  We call this logic MTL+C.

\subsection*{Q2MLO with punctuality}
It is well known that MTL together with the counting modalities is equivalent to Q2MLO, the first-order theory of linear order with monadic predicates, equipped with the metric quantifier
\[ \exists_z^{z+1} y.\varphi(y,z)\]
that can only be applied to formulas with two free variables (including the one being quantified).

\paragraph{Adding past}
It is clear that by including temporal dual operators, MTL with past and counting operators can express ``reverse'' metric quantifiers \textit{viz.}\footnote{Not sure if MTL+counting $\leftrightarrow$ Q2MLO includes past operators}
\[ \exists_{z-1}^{z} y.\varphi(y,z).\]

\paragraph{Adding punctuality}
To capture punctuality, we add to Q2MLO the $+1$ function (technically, the $+1$ relation), to obtain Q2MLO($+1$).  It should be clear that the resulting logic is as expressive as MTL+C.

\section*{Equivalence of bounded FO($<$,$+1$) and Q2MLO($+1$)}
Our aim is to show the following:
\begin{theorem}
For every bounded FO($<$,$+1$) formula $\psi(z)$ there is an equivalent Q2MLO($+1$) formula $\psi'(z)$.
\end{theorem}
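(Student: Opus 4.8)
The plan is to exploit \emph{boundedness} to reduce the metric content of $\psi$ to purely local, pairwise reasoning between adjacent unit intervals, which is exactly what the two-variable metric quantifier of Q2MLO($+1$) can express. Fix a bound $N$ such that every quantifier of $\psi(z)$ ranges over $(z-N, z+N)$; relative to $z$ this interval partitions into the $2N$ half-open unit \emph{cells} $C_k = [\,z+k,\ z+k+1\,)$ for $-N \le k < N$. The first step is to rewrite every quantifier $\exists x.\,\chi$ of $\psi$ as a finite disjunction $\bigvee_k \exists x.(x \in C_k \wedge \chi)$ over the cell in which $x$ lies; the guard $x \in C_k$ is definable from $<$ together with a $k$-fold application of the $+1$ relation to $z$. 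After this relativisation every atomic subformula is a monadic predicate, an order comparison, or a $+1$-atom $x' = x+1$. Order comparisons between points of distinct cells are fixed by the cell indices, so the only genuinely metric atoms remaining are the $+1$-atoms, and each such atom links a point of some $C_k$ to the point of $C_{k+1}$ occupying the \emph{same position} within its cell.

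Next I would set up a composition (Feferman--Vaught / Shelah) argument over the ordered sum $C_{-N} + \dots + C_{N-1}$. Inside a single cell no two points are at distance exactly one, so $+1$ contributes nothing intra-cell and each cell is just a linear order with monadic predicates; its rank-$q$ FO($<$)-type (for $q$ the quantifier rank of $\psi$) is a finite piece of data, expressible in plain FO($<$) and hence in Q2MLO. The classical addition theorem for linear orders says that the rank-$q$ type of the whole sum is determined by the \emph{sequence} of cell-types, \emph{provided} we also know how the $+1$-links align consecutive cells. The crux is therefore to transport this alignment information across the sum using only the restricted metric quantifier.

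The observation that makes this possible is that $+1$ maps $C_k$ order-isomorphically onto $C_{k+1}$, so an aligned pair $(x, x+1)$ is captured by a \emph{two-variable} relation between a point and its unique metric successor --- precisely the form $\exists_z^{z+1} y.\,\varphi(y,z)$ (and its past dual) is permitted to take. I would then show, by induction on the structure of $\psi$, that any joint metric constraint among several variables spread across the cells factors into a \emph{chain} of such pairwise aligned relationships between consecutive cells, each discharged by one application of the metric quantifier, while the bounded FO($<$)-type of each cell is threaded through as a finitely-valued ``colour'' computed by ordinary quantification. Assembling these chains over the $2N$ cell-boundaries yields the required Q2MLO($+1$) formula $\psi'(z)$.

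The step I expect to be the main obstacle is exactly this factorisation. A priori an FO($<$,$+1$) formula may constrain \emph{many} variables simultaneously through interleaved $+1$-atoms and order atoms, whereas the metric quantifier is confined to two free variables at a time. Making the composition go through requires proving that the rank-$q$ type of a cell together with the aligned type of its successor already determines everything the formula can observe about the pair, so that the alignment can be carried one cell-boundary at a time without loss. Establishing this ``locality of metric observation'' under boundedness, and checking that the resulting transport stays inside the two-variable discipline, is where the real work lies.
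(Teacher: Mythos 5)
Your reduction to unit cells and the observation that $+1$ aligns consecutive cells order-isomorphically are sound, and they correspond roughly to the paper's first simplification step (the ``stacking'' reduction to quantifiers ranging over $(z,z+1)$). But the proposal stops exactly where the theorem begins. You note yourself that the main obstacle is factoring a constraint on many quantified points inside one unit interval into a chain of two-variable metric quantifications, and you offer no mechanism for doing so. The Feferman--Vaught addition theorem does not supply one: once the $+1$-links between adjacent cells are added, the structure is no longer an ordered sum, and the joint type of two aligned cells is \emph{not} determined by their individual FO($<$)-types together with any finitely-valued ``colour''. Knowing the alignment is knowing an order-isomorphism between the cells, and the first-order theory of that composite object is precisely the multi-variable metric information that the two-free-variable discipline of Q2MLO forbids you from accessing directly. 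So the ``locality of metric observation'' you defer is not a technical check to be discharged at the end; it is the entire content of the theorem.

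The paper closes this gap with two ingredients your proposal lacks. First, a normal form (via Hodkinson) reducing everything to formulas $\exists x_1 \cdots \exists x_n$ with $z<x_1<\cdots<x_n<z+1$, together with universally quantified conditions $\varphi_i$ on each subinterval $(x_{i-1},x_i)$ and each singleton $\{x_i\}$. Second, and crucially, a stitching lemma: such a formula is equivalent to the conjunction of (i) ``for every $u\in(z,z+1)$ some prefix $\psi_i(z,u)$ of the witness pattern is realized on $(z,u)$'' and (ii) ``looking back from $w=z+1$, there is some $u\in(w-1,w)$ such that the whole pattern is realized on $(u,w)$''. Each conjunct places only two free variables under a (forward or backward) metric quantifier, and the proof of the equivalence splices a maximal prefix of witnesses seen from $z$ with a full witness sequence seen from $z+1$, using the first point where the two sequences cross. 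Without this lemma, or an equivalent device for trading one $n$-variable metric quantification for a bounded number of two-variable ones, the proposal does not prove the theorem.
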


\subsection*{Simplified form}
We first show that it suffices to consider FO($<$,$+1$) formulas in a simplified form.

\begin{enumerate}
\item Using stacking we can assume that all quantified variables are in the range $(z,z+1)$ and that $+1$ does not otherwise occur.
\item Using Hodkinson, we can assume the formula is of the form
\[ \psi(z) = \exists x_1 \exists x_2 \cdots \exists x_n \forall y. \varphi(\overline{x},y,z).\]
\item By taking a disjunction over all possible orderings of the $x_i$ we can assume $z<x_1<x_2<\cdots<x_n<z+1$.  
\item We relativize the subformula $\forall y. \varphi(\overline{x},y,z)$ the subintervals $(z,x_1),{x_1},(x_1,x_2)\ldots$:
\[\begin{array}{rccl}
\forall y. \varphi(\overline{x},y,z) &=&&\forall y \in (z,x_1). \varphi_1(\overline{x},y,z)\\
&&\wedge&\forall y \in \{x_1\}. \varphi_2(\overline{x},y,z)\\
&&\wedge&\forall y \in (x_1,x_2). \varphi_3(\overline{x},y,z)\\
&&\wedge&\cdots \\
&&\wedge&\forall y \in (x_n,z+1). \varphi_{2n+1}(\overline{x},y,z).
\end{array}\]
\end{enumerate}
Now each $\varphi_i$ is quantifier-free, and the relativization of $y$ means the binary relations between $\overline{x}$, $y$, and $z$ are all known.  So $\varphi_i$ is a boolean combination of monadic predicates.  By considering $\psi(z)$ as a disjunction over all possible choices of predicate values for $z$ and $\overline{x}$, we can further simplify each $\varphi_i$ to a boolean combination of monadic predicates in $y$.  That is, we need only consider formulas of the form:
\[\begin{array}{rcll}
\psi(z) &=& \exists x_1 \exists x_2 \cdots \exists x_n\:& z < x_1 < x_2 < \cdots < x_n < z+1\\
&&&\wedge\: \forall y \in (z,x_1). \varphi_1(y) \\
&&&\wedge\: \forall y \in \{x_1\}. \varphi_2(y) \\
&&&\wedge\: \forall y \in (x_1,x_2). \varphi_3(y) \\
&&&\wedge\: \cdots\\
&&&\wedge\: \forall y \in (x_n,z+1). \varphi_{2n+1}(y).
\end{array}\]

\subsection*{Equivalence}
Let $\psi(z)$ be a FOMLO formula in the form described above.  For convenience, let $\psi(z,z')$ be the FOMLO formula obtained by replacing (both) occurrences of $z+1$ with the variable $z'$ (so, with abuse of notation, $\psi(z) = \psi(z,z+1)$).  Also for $1\leq i \leq 2n+1$, let
\[\begin{array}{rcll}
\psi_i(z,z') &:=& \exists x_0 \exists x_1 \cdots \exists x_k \:& z =x_0 < x_1 < x_2 < \cdots < x_k = z'\\
&&&\wedge\: \forall y \in (x_0,x_1). \varphi_1(y) \\
&&&\wedge\: \forall y \in \{x_1\}. \varphi_2(y) \\
&&&\wedge\: \cdots\\
&&&\wedge\: \forall y \in I_i. \varphi_{i}(y),
\end{array}\]
where $k = \lceil \frac{i}{2} \rceil$ and $I_i = \{x_k\}$ if $i$ is even and $I_i = (x_{k-1},x_k)$ if $i$ is odd.   That is, $\psi_i(z,z')$ is the formula obtained by considering the first $i$ relativized conjuncts in $\psi(z,z')$ (with some book-keeping to simplify the presentation).

\begin{lemma} 
$\psi(z)$ is equivalent to
\[ \psi'(z):=\theta_1(z) \wedge \theta_2(w) \wedge (w=z+1)\]
where
\begin{eqnarray*}
\theta_1(z) &=& \forall u \in (z,z+1).\bigvee_{i=1}^{2n+1} \psi_i(z,u)\textrm{ and}\\
\theta_2(w) &=& \exists u \in (w-1,w).\psi(u,w).
\end{eqnarray*}

\end{lemma}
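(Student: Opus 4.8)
The plan is to prove the two implications of the equivalence separately, working over the (dense) time domain. The forward implication $\psi(z)\Rightarrow\psi'(z)$ is routine, while the backward implication carries essentially all of the difficulty.

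\textbf{Forward direction.} Assume $\psi(z)$ holds, witnessed by $z<x_1<\cdots<x_n<z+1$ satisfying the relativized conjuncts, and put $w=z+1$. For $\theta_2(w)$ I pick any point $u$ strictly between $z$ and $x_1$ (which exists by density) and retain the same witnesses $x_1,\dots,x_n$; the only conjunct affected is the first, whose interval shrinks from $(z,x_1)$ to $(u,x_1)$, so $\varphi_1$ still holds throughout and $\psi(u,w)$ holds, giving $\theta_2(w)$. For $\theta_1(z)$, given any $u\in(z,z+1)$ I locate the region of the tiling $(z,x_1),\{x_1\},(x_1,x_2),\dots,(x_n,z+1)$ containing $u$; if $u$ lies in the $i$-th region I take the appropriate prefix $x_1,\dots$ of the witnesses together with the endpoint $x_k=u$ and check that the first $i$ conjuncts of $\psi_i(z,u)$ hold, using that each $\varphi_j$ holds on the whole region, hence on the sub-interval ending at $u$. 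Thus $\bigvee_{i}\psi_i(z,u)$ holds for every such $u$, giving $\theta_1(z)$.

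\textbf{Backward direction.} This is the heart of the lemma. From $\theta_2(w)$ with $w=z+1$ I obtain a point $u^\ast\in(z,z+1)$ and a complete structure $u^\ast<x_1^\ast<\cdots<x_n^\ast<z+1$ realizing all $2n+1$ conjuncts; this certifies both that the switch predicates $\varphi_2,\varphi_4,\dots,\varphi_{2n}$ are realizable and that the final region $\varphi_{2n+1}$ reaches all the way to $z+1$. The obstacle is that this structure starts at $u^\ast$ rather than at $z$, so it must be \emph{anchored} at $z$. I would do this by an extremal construction driven by $\theta_1$: define the right boundary of each successive region as the supremum of the points $u$ for which the corresponding prefix formula $\psi_i(z,u)$ holds, and use $\theta_1$ (which guarantees every interior $u$ is covered by some prefix) to show that these suprema are attained in the right way, that consecutive regions meet at genuine switch points where the even conjunct holds, and that the odd conjuncts hold on the open regions between them. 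The role of $\theta_2$ is precisely to forbid the degenerate case in which $\theta_1$ is satisfied only by short prefixes (for instance $\varphi_1$ holding throughout $(z,z+1)$, so that $\psi_1(z,u)$ witnesses $\theta_1$ for every $u$ while no genuine switch ever occurs): $\theta_2$ forces a full-depth structure ending at $z+1$ to exist, whose switch points I use to supply those needed by the prefix construction. Having assembled $z<x_1<\cdots<x_n<z+1$, I verify each conjunct of $\psi(z)$ directly.

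I expect the main obstacle to be exactly this boundary analysis in the backward direction: showing that the suprema defining the region boundaries are attained so that the point conjuncts $\varphi_{2i}$ hold at the $x_i$ themselves, handling the open/closed distinction at each endpoint, and arguing that the per-point prefixes supplied by $\theta_1$ at different $u$ cohere into a single global structure rather than an incompatible family. Density and a mild completeness of the time domain will be used to guarantee the interior points and the relevant suprema. Once these boundary issues are settled, confirming that the assembled sequence satisfies every conjunct of $\psi(z)$ is routine.
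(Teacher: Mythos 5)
Your forward direction is correct and is essentially the paper's argument. The problem is the backward direction, where you name the central difficulty but do not supply a mechanism that resolves it, and the specific construction you do propose fails. Taking the boundary $x_i$ to be the supremum of the points $u$ for which the corresponding prefix formula holds is wrong in general: with $n=1$, let $\varphi_1$ hold exactly on $(z,z+0.9)$, let $\varphi_2$ hold only at $z+0.1$, and let $\varphi_3$ hold on $(z+0.1,z+1)$. Then $\sup\{u:\psi_1(z,u)\}=z+0.9$, where $\varphi_2$ fails, while the only admissible witness is $x_1=z+0.1$. The region boundaries of the true witness structure are not extremal for the prefix formulas (for $n\geq 2$ even a greedy ``largest point where $\psi_{2i}$ holds'' choice can block the later conjuncts), so no amount of showing that the suprema are ``attained in the right way'' will produce the witnesses. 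The obstacles you list at the end --- that the per-point prefix structures supplied by $\theta_1$ at different $u$ need not cohere into one global structure, and that the point conjuncts $\varphi_{2i}$ must actually hold at the chosen $x_i$ --- are exactly the content of the lemma, and your sketch defers them rather than solving them.

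The paper resolves this with two concrete ideas absent from your proposal. First, instead of analysing all $u\in(z,z+1)$, it pigeonholes a single index $r$ such that $\psi_r(z,u)$ holds for $u$ arbitrarily close to $z+1$ and passes to the limit to obtain $\psi_r(z,z+1)$; this sidesteps global coherence and yields one fixed prefix structure $z=x_0<x_1<\cdots<x_k=z+1$ anchored at $z$. Second --- and this is the step your sketch is really missing --- it splices that prefix with the full structure $x_1'<\cdots<x_n'$ supplied by $\theta_2$, cutting at the first index $m$ with $x_m'<x_m$ (which exists since $x_n'<z+1=x_n$ after padding) and taking $x_1,\ldots,x_{m-1},x_m',\ldots,x_n'$ as witnesses for $\psi(z)$; every relativized interval of the spliced sequence is an interval of one of the two original structures except $(x_{m-1},x_m')$, which is contained in $(x_{m-1},x_m)$. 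Without this splicing device or an equivalent, the switch points of your $\theta_2$ structure cannot simply be ``handed to the prefix construction'': they interleave arbitrarily with whatever partial structure $\theta_1$ provides near $z$, and you give no argument that a consistent global sequence results. As written, the backward direction is a plan with the hard step still open, not a proof.
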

\begin{proof}
$\psi(z) \Rightarrow \psi'(z)$.  Let $x_1, \ldots, x_n \in (z,z+1)$ be witnesses for the existential quantifiers in $\psi$, and let $x_0=z$ and $x_{n+1}=z+1$.  From the definition of $\psi_i$, if $u \in (x_i,x_{i+1})$ (for $0 \leq i \leq n$) then $\psi_{2i+1}(z,u)$ holds.  Further, if $u = x_i$ (for $1 \leq i \leq n$) then $\psi_{2i}(z,u)$ holds.  Thus $\theta_1$ is satisfied for all $u \in (z,z+1)$.  Any $u \in (z,x_1)$ is a witness for $\theta_2(z+1)$, and as $x_1\leq z+1$, $u \in (w-1,w)$ where $w=z+1$.  Thus $\theta_2$ holds when $w=z+1$.  Thus $\psi'(z)$ is satisfied.

\vspace*{1ex}

$\psi'(z) \Rightarrow \psi(z)$.  Note that if $\psi_r(z,u)$ holds for $u$ arbitrarily close to $z+1$ then $\psi_r(z,z+1)$ holds.  In particular, if $\psi_{2n+1}(z,u)$ holds for $u$ arbitrarily close to $z+1$ then we are done.  As $\bigvee_{i=1}^{2n+1} \psi_i(z,u)$ holds for all $u \in (z,z+1)$, there is some $r$ such that $\psi_r(z,u)$ holds arbitrarily close to $z+1$.  It follows that $\psi_r(z,z+1)$ is satisfied.  Suppose $r<2n+1$, and let $x_0 = z, x_1, \ldots, x_k=z+1$ be witnesses for the existential quantifiers in $\psi_r(z,z+1)$.  For convenience (if $k<n$), let $x_j=z+1$ for $k < j\leq n$.  Note that $k=\lceil\frac{r}{2}\rceil\leq n$, so it is always the case that $x_n=z+1$.  

Now, as $\theta_2(z+1)$ holds, $\psi(u,z+1)$ is satisfied for some $u\in(z,z+1)$.  Let $x_1', \ldots, x_n' \in (z,z+1)$ be the witnesses for $\psi(u,z+1)$.  Let $m$ be the smallest index such that $x_m'<x_m$.  As $x_n'<z+1=x_n$ such an index must exist.  Then we claim that $x_1, \ldots, x_{m-1}, x_m', x_{m+1}', \ldots, x_n'$ are witnesses for $\psi(z,z+1)$.  Every interval $I_i$ defined by these witnesses\footnote{$I_{2k} = \{z_k\}$ and $I_{2k+1} = (z_{k},z_{k+1})$ where $z_i = x_i$ if $i<m$ and $x_i'$ if $i\geq m$}, except $I_{2m-1} = (x_{m-1},x_m')$, is either an interval defined by witnesses of $\psi_r(z,z+1)$ or an interval defined by witnesses of $\psi(u,z+1)$, so all points in $I_i$ satisfy $\varphi_i$ as required.  For the remaining interval, we observe that $(x_{m-1},x_m') \subseteq (x_{m-1},x_m)$, thus all points satisfy $\varphi_{2m-1}$ as required.  Thus $\psi(z)$ is satisfied.
\end{proof}

\end{document}